\newenvironment{proof}
{\pagebreak[1]{\narrower\noindent {\bf Proof:\quad\nopagebreak}}}{\QED}
\newcommand{\ang}[1]{\langle#1\rangle}
\newcommand{\xvec}[1]{\ifcase 3{#1} {\ang {x_1,x_2,x_3} } \else
\ifcase 4{#1} {\ang{x_1,x_2,x_3,x_4}} \else {\ang {x_1,\ldots,x_{#1}}}\fi\fi}
\newcommand{\yvec}[1]{\ifcase 3{#1} {\ang {y_1,y_2,y_3} } \else
\ifcase 4{#1} {\ang{y_1,y_2,y_3,y_4}} \else {\ang {y_1,\ldots,y_{#1}}}\fi\fi}
\newcommand{\zvec}[1]{\ifcase 3{#1} {\ang {z_1,z_2,z_3} } \else
\ifcase 4{#1} {\ang{z_1,z_2,z_3,z_4}} \else {\ang {z_1,\ldots,z_{#1}}}\fi\fi}
\newcommand{\vecc}[2]{\ifcase 3{#2} {\ang { {#1}_1,{#1}_2,{#1}_3 } } \else
\ifcase 4{#1} {\ang { {#1}_1,{#1}_2,{#1}_3,{#1}_{4} } }
\else {\ang { {#1}_1,\ldots,{#1}_{#2}}}\fi\fi}
\newcommand{\veccd}[3]{\ifcase 3{#2} {\ang { {#1}_{{#3}1},{#1}_{{#3}2},{#1}_{{#3}3} } } \else
\ifcase 4{#1} {\ang { {#1}_{{#3}1},{#1}_{{#3}2},{#1}_{#3}3},{#1}_{{#3}4} }
\else {\ang { {#1}_{{#3}1},\ldots,{#1}_{{#3}{#2}}}}\fi\fi}
\newcommand{\veccz}[2]{\ifcase 3{#2} {\ang { {#1}_0,{#1}_2,{#1}_3 } } \else
\ifcase 4{#1} {\ang { {#1}_0,{#1}_2,{#1}_3,{#1}_{4} } }
\else {\ang { {#1}_0,\ldots,{#1}_{#2}}}\fi\fi}
\newcommand{\xve}[1]{\ifcase 3{#1} {x_1,x_2,x_3} \else
\ifcase 4{#1} {x_1,x_2,x_3,x_4} \else {x_1,\ldots,x_{#1}}\fi\fi}
\newcommand{\yve}[1]{\ifcase 3{#1} {y_1,y_2,y_3} \else
\ifcase 4{#1} {y_1,y_2,y_3,y_4} \else {y_1,\ldots,y_{#1}}\fi\fi}
\newcommand{\zve}[1]{\ifcase 3{#1} {z_1,z_2,z_3} \else
\ifcase 4{#1} {z_1,z_2,z_3,z_4} \else {z_1,\ldots,z_{#1}}\fi\fi}
\newcommand{\ve}[2]{\ifcase 3#2 {{#1}_1,{#1}_2,{#1}_3} \else
\ifcase 4#2 {{#1}_1,{#1}_2,{#1}_3,{#1}_{4}}
\else {{#1}_1,\ldots,{#1}_{#2}}\fi\fi}
\newcommand{\ved}[3]{\ifcase 3#2 {{#1}_{{#3}1},{#1}_{{#3}2},{#1}_{{#3}3}} \else
\ifcase 4#2 {{#1}_{{#3}1},{#1}_{{#3}2},{#1}_{{#3}3},{#1}_{{#3}4}}
\else {{#1}_{{#3}1},\ldots,{#1}_{{#3}{#2}}}\fi\fi}
\newcommand{\fuve}[3]{
\ifcase 3#2
{{#3}({#1}_1),{#3}({#1}_2,{#3}({#1}_3)} \else
\ifcase 4#2
{{#3}({#1}_1),{#3}({#1}_2),{#3}({#1}_3),{#3}({#1}_4)}
\else
{{#3}({#1}_1),\ldots,{#3}({#1}_{#2})}\fi\fi}
\newcommand{\setmathchar}[1]{\ifmmode#1\else$#1$\fi}
\newcommand{\vlist}[2]{%
	\setmathchar{%
		\compound#2\one{#2}\two
		\ifcompound
			({#1}_1,\ldots,{#1}_{#2})
		\else
			\ifcat N#2
				({#1}_1,\ldots,{#1}_{#2})
			\else
				\ifcase#2
					({#1}_0)\or
					({#1}_1)\or
					({#1}_1,{#1}_2)\or
					({#1}_1,{#1}_2,{#1}_3)\or
					({#1}_1,{#1}_2,{#1}_3,{#1}_4)\else
					({#1}_1,\ldots,{#1}_{#2})
				\fi
			\fi
		\fi}}
\newif\ifcompound
\def\compound#1\one#2\two{%
	\def\one{#1}
	\def\two{#2}
	\if\one\two
		\compoundfalse
	\else
		\compoundtrue
	\fi}
\newcommand{\xwe}[1]{\ifcase 3{#1} {x_1\wedge x_2\wedge x_3} \else
\ifcase 4{#1} {x_1\wedge x_2\wedge x_3\wedge x_4} \else {x_1\wedge\citedots \wedge
x_{#1}}\fi\fi}
\newcommand{\we}[2]{\ifcase 3#2 {\ang { {#1}_1\wedge {#1}_2\wedge {#1}_3 } } \else
\ifcase 4{#1} {\ang { {#1}_1\wedge {#1}_2\wedge {#1}_3\wedge {#1}_{4} } }
\else {\ang { {#1}_1\wedge\citedots\wedge {#1}_{#2}}}\fi\fi}
\newcommand{\s}[1]{\s_{#1}}
\newcommand{\monus}{\;\raise.5ex\hbox{{${\buildrel
    \ldotp\over{\hbox to 6pt{\hrulefill}}}$}}\;}
\newcounter{savenumi}
\newtheorem{theoremfoo}{Theorem}[section] 
\newenvironment{theorem}{\pagebreak[1]\begin{theoremfoo}}{\end{theoremfoo}}
\newtheorem{lemmafoo}[theoremfoo]{Lemma}
\newtheorem{conjecturefoo}[theoremfoo]{Conjecture}
\newenvironment{conjecture}{\pagebreak[1]\begin{conjecturefoo}}{\end{conjecturefoo}}
\newtheorem{conventionfoo}[theoremfoo]{Convention}
\newtheorem{porismfoo}[theoremfoo]{Porism}
\newtheorem{gamefoo}[theoremfoo]{Game}
\newtheorem{corollaryfoo}[theoremfoo]{Corollary}
\newenvironment{corollary}{\pagebreak[1]\begin{corollaryfoo}}{\end{corollaryfoo}}
\newtheorem{claimfoo}[theoremfoo]{Claim}
\newtheorem{openfoo}[theoremfoo]{Open Problem}
\newtheorem{exercisefoo}{Exercise}
\newcommand{\fig}[1] 
{
 \begin{figure}
 \begin{center}
 \input{#1}
 \end{center}
 \end{figure}
}
\newtheorem{potanafoo}[theoremfoo]{Potential Analogue}
\newtheorem{notefoo}[theoremfoo]{Note}
\newtheorem{notabenefoo}[theoremfoo]{Nota Bene}
\newtheorem{nttn}[theoremfoo]{Notation}
\newtheorem{empttn}[theoremfoo]{Empirical Note}
\newtheorem{examfoo}[theoremfoo]{Example}
\newtheorem{dfntn}[theoremfoo]{Def}
\newtheorem{propositionfoo}[theoremfoo]{Proposition}
\newcommand{\yyskip}{\penalty-50\vskip 5pt plus 3pt minus 2pt}
\newcommand{\blackslug}{\hbox{\hskip 1pt
        \vrule width 4pt height 8pt depth 1.5pt\hskip 1pt}}
\newcommand{\QED}{{\penalty10000\parindent 0pt\penalty10000
        \hskip 8 pt\nolinebreak\blackslug\hfill\lower 8.5pt\null}
        \par\yyskip\pagebreak[1]}
\newcommand{\BBB}{{\penalty10000\parindent 0pt\penalty10000
        \hskip 8 pt\nolinebreak\hbox{\ }\hfill\lower 8.5pt\null}
        \par\yyskip\pagebreak[1]}
\newtheorem{factfoo}[theoremfoo]{Fact}
\newenvironment{block}{\begin{list}{\hbox{}}{\leftmargin 1em
    \itemindent -1em \topsep 0pt \itemsep 0pt \partopsep 0pt}}{\end{list}}
\title{Average-Case Hardness of Proving Tautologies and Theorems}
\author{Hunter Monroe}
\date{\today}
\begin{document}
\maketitle

\begin{abstract}
We consolidate two widely believed conjectures about tautologies---no optimal proof system exists, and most require superpolynomial size proofs in any system---into a $p$-isomorphism-invariant condition satisfied by all paddable $\textbf{coNP}$-complete languages or none. The condition is: for any Turing machine (TM) $M$ accepting the language, $\textbf{P}$-uniform input families requiring superpolynomial time by $M$ exist (equivalent to the first conjecture) and appear with positive upper density in an enumeration of input families (implies the second). In that case, no such language is easy on average (in $\textbf{AvgP}$) for a distribution applying non-negligible weight to the hard families.

The hardness of proving tautologies and theorems is likely related. Motivated by the fact that arithmetic sentences encoding ``string $x$ is Kolmogorov random'' are true but unprovable with positive density in a finitely axiomatized theory $\mathcal{T}$ (Calude and J{\"u}rgensen), we conjecture that any propositional proof system requires superpolynomial size proofs for a dense set of $\textbf{P}$-uniform families of tautologies encoding ``there is no $\mathcal{T}$ proof of size $\leq t$ showing that string $x$ is Kolmogorov random''. This implies the above condition.

The conjecture suggests that there is no optimal proof system because undecidable theories help prove tautologies and do so more efficiently as axioms are added, and that constructing hard tautologies seems difficult because it is impossible to construct Kolmogorov random strings. Similar conjectures that computational blind spots are manifestations of noncomputability would resolve other open problems.
\end{abstract}
\section{Introduction}
The literature has explored the possibility that any TM accepting a given $\textbf{coNP}$-complete language requires superpolynomial time on some $\textbf{P}$-uniform input family (a \emph{hard sequence}).\footnote{Comments are appreciated from Scott Aaronson, Valentina Harizonov, Mehmet Kayaalp, Daniel Monroe, Pavel Pudl{\'a}k, Luca Trevisan, and participants in seminars at George Washington University and Davidson College. This paper originated from a discussion with Russell Impagliazzo on formalizing the idea that proof speedup for tautologies and arithmetic were the same phenomenon. Remaining errors are my own.}  This paper considers a stronger conjecture that such hard sequences appear with positive upper density in an enumeration of input families (\emph{dense hard sequences}, precise definitions will follow). It finds that if any paddable $\textbf{coNP}$-complete language has dense hard sequences, they all do, and none are in $\textbf{AvgP}$. It also explores whether proving tautologies and theorems is hard on average, and whether these two are related.\footnote{The phrase ``hardness of proving tautologies'' below refers only to proof size and not the hardness of finding proofs.} It also considers the hypothesis that computational hardness is a manifestation of noncomputability, by formalizing the intuition that resource-bounded Turing machines (TMs) have blind spots regarding noncomputable languages.
	
The existence of hard sequences for a $\textbf{coNP}$-complete language $L$ has broad implications, as shown by a rich literature.\footnote{This literature was launched by Kraj\'{\i}\v{c}ek and Pudl{\'a}k\cite{Krajicek}. Hard sequences for TMs and for proof systems have been examined by Chen et al\cite{ChenFlumMuller}, Kraj\'{\i}\v{c}ek\cite{Krajicekbounded}\cite{Krajicekproof}, and Monroe\cite{MonroeTCS}. The definition of almost optimal comes from \cite{Krajicek}, but that terminology is from Chen and Flum\cite{ChenFlumLogicsPTIME}.} To define terms, let  $(x_s)_{s\in \mathbb{N}}$ be a sequence of inputs $x_s\in L$ such that some TM $A$ on input $1^s$ produces $x_s$ in at most $s^c$ steps, for some constant $c$. Say that $(x_s)_{s\in \mathbb{N}}$ is a \emph{hard sequence} for $M$ if $(T_M(x_s))_{s\in \mathbb{N}}$ is not polynomially bounded, where $T_M$ is the function that maps a string $x$ to how many steps $M(x)$ takes. If every $M$ accepting $L$ has hard sequences, say that $L$ has hard sequences. In that case, $L$ cannot have a best algorithm $M$, because we can create $M'$ that checks whether an element of the hard sequence appears as the input and runs $M$ otherwise. In that case, say that $L$ has no \emph{almost optimal} $M$, that is, it is not the case that for any other $M'$, there exists $c$, such for all $x$, $T_M(x)\leq (T_{M'}(x)+|x|)^c$.

Hard sequences can also be defined for propositional proof systems for tautologies ($\texttt{TAUT}$). A propositional proof system is a function $h\in \textbf{FP}$ with range $\texttt{TAUT}$.\cite{CookReckhow} For tautology $\tau$, any string $w$ such that $h(w)=\tau$ is a proof of $\tau$. The proof system $h$ is $p$-optimal if for any other proof system $f$, there exists $g\in \textbf{FP}$ such that $h(g(x))=f(x)$.\cite{Krajicek} A hard sequence of tautologies for a proof system is defined the same as for TMs but with the requirement that tautologies have proofs of superpolynomial size. The concepts coincide for a nondeterministic TM that accepts $\texttt{TAUT}$ with the accepting path as the proof.

The existence of hard sequences for a paddable $\textbf{coNP}$-complete language is one of a large group of equivalent statements listed in the following theorem. Statement (i) (and therefore the others) is widely believed to hold, while statement (ix) will play a key role in this paper.

\begin{theorem}\label{equivalentstatements}
The following statements are equivalent:
\begin{enumerate}[label=\roman*.]
	\item There is no $p$-optimal propositional proof system for $\texttt{TAUT}$.\cite{Krajicek}
\item There exists a paddable $\textbf{coNP}$-complete language with hard sequences.\cite{ChenFlumMuller}
\item All paddable $\textbf{coNP}$-complete languages have hard sequences.\cite{ChenFlumMuller} 
\item There exists a paddable $\textbf{coNP}$-complete language with no almost optimal algorithm.\cite{Krajicek} 
\item For all paddable $\textbf{coNP}$-complete languages, there is no almost optimal algorithm.\cite{Krajicek} 
\item For any propositional proof system, some hard sequence of tautologies in $\textbf{P}$ has superpolynomial size proofs.\footnote{See Kraj\'{\i}\v{c}ek\cite{Krajicekbounded} Theorem 14.2.2 and Chen et al\cite{ChenFlumMuller} Theorem 6.7.} 
\item No theory $\mathcal{T}$ proves the soundness of all propositional proof systems.\cite{PudlakSimons}
\item For any $M$ accepting $\texttt{coBHP}=\{\langle N,x,1^t\rangle|$ there is no accepting path of nondeterministic TM $N$ on input $x$ with $t$ or fewer steps$\}$, there exists some non-halting $\langle N',x'\rangle$ such that $f(t)=T_M(\langle N',x',1^t\rangle)$ is superpolynomial.\cite{MonroeTCS}
\item For any $M$ accepting  $\texttt{coTHEOREMS}=\{\langle\phi,1^t\rangle|$ sentence $\phi$ has no formal proof in theory $\mathcal{T}$ with size $\leq t\}$, there exists some sentence $\phi'$ with no proof in $\mathcal{T}$ of any length, such that $f(t)=T_M(\langle\phi',1^t\rangle)$ is superpolynomial.\footnote{Chen and Flum\cite{ChenFlumParameterized} Theorem 31 that \texttt{coBHP} and \texttt{coTHEOREMS} are fixed parameter tractable reducible to each other, so Monroe\cite{MonroeTCS} applies.}
\end{enumerate}
\end{theorem}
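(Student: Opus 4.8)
The plan is to prove the theorem not as a single long implication cycle but by clustering the nine statements, establishing the equivalences within and between clusters largely by assembling the links cited in the statement, and then supplying one uniform gluing principle --- $p$-isomorphism invariance --- to identify the ``there exists'' and ``for all'' versions over paddable $\textbf{coNP}$-complete languages. I would group them as: the language-level hard-sequence pair (ii),(iii); the almost-optimal-algorithm pair (iv),(v); the proof-system statements (i),(vi); the soundness statement (vii); and the machine-specific statements (viii),(ix). Since each individual implication already appears in the cited literature, the real content is to check that the definitions match across those sources and to install the invariance argument that merges the quantified versions.

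First I would fix one paddable $\textbf{coNP}$-complete language $L$ and prove ``$L$ has hard sequences'' $\iff$ ``$L$ has no almost optimal algorithm.'' One direction is the diagonalization already sketched in the introduction: a hard sequence $(x_s)$ for an acceptor $M$ yields an $M'$ that detects the $\textbf{P}$-uniform family and answers in polynomial time, so $T_{M'}$ beats $T_M$ on $(x_s)$ by more than any polynomial and $M$ is not almost optimal; hence an almost optimal $M$ has no hard sequence. The converse, that the absence of hard sequences produces an almost optimal acceptor, is the substance supplied by Chen, Flum, and M\"uller, which I would cite rather than reprove. Taking $L=\texttt{TAUT}$ and invoking the correspondence between acceptors of \texttt{TAUT} and propositional proof systems (Cook--Reckhow, Kraj\'{\i}\v{c}ek) identifies ``\texttt{TAUT} has an almost optimal algorithm'' with ``a $p$-optimal proof system exists,'' which yields the equivalence of (i) with the $L=\texttt{TAUT}$ cases of the language-level statements, while (vi) then follows from the Kraj\'{\i}\v{c}ek and Chen et al.\ theorems cited in the footnote.

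To promote the single-language equivalences to the quantified statements (ii)--(v), I would use that all paddable $\textbf{coNP}$-complete languages are $p$-isomorphic (Berman--Hartmanis; paddability furnishes the length-increasing, polynomial-time invertible reductions that upgrade many-one completeness to $p$-isomorphism). It then suffices to check that both properties are $p$-isomorphism invariant: a $p$-isomorphism $\pi:L_1\to L_2$ carries a $\textbf{P}$-uniform family of positive instances to a $\textbf{P}$-uniform family and, because $\pi$ and $\pi^{-1}$ run in polynomial time, preserves superpolynomial running time in both directions. This simultaneously collapses (ii) $\iff$ (iii) and (iv) $\iff$ (v) and ties all four to the \texttt{TAUT} anchor, hence to (i).

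Statement (vii) is the Pudl{\'a}k--Simons characterization of (i) via provable soundness, used as cited. For (viii) and (ix) I would observe that \texttt{coBHP} and \texttt{coTHEOREMS} are themselves paddable $\textbf{coNP}$-complete with $1^t$ serving as the padding parameter, so they fall under the transfer above; the only extra point is that their witnessing families --- fix a non-halting $\langle N',x'\rangle$, respectively an unprovable $\phi'$, and vary $t$ --- are exactly the hard sequences demanded by the general condition, which is Monroe's reformulation, and (ix) follows from (viii) through the fixed-parameter interreducibility of Chen and Flum noted in the footnote. I expect the two genuine difficulties to be: (a) the converse half of the first bridge, where one must assemble a single almost optimal acceptor (equivalently a $p$-optimal proof system) from the mere nonexistence of hard sequences --- a global amalgamation rather than a local diagonalization, which is why it rests on the cited work; and (b) matching the per-instance quantifier of (viii) and (ix) to the $\textbf{P}$-uniform family formulation of hard sequences, which depends on the padding and self-reducibility structure peculiar to those languages and is precisely where the cited reductions must be applied with care.
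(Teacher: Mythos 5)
Your proposal is correct and takes essentially the same approach as the paper: the paper gives no self-contained argument for this theorem, assembling it instead from precisely the sources you invoke (Kraj\'{\i}\v{c}ek--Pudl\'{a}k for the $p$-optimality, almost-optimal-algorithm, and soundness statements; Chen, Flum, and M\"uller for the hard-sequence statements; Monroe plus the Chen--Flum reduction for \texttt{coBHP} and \texttt{coTHEOREMS}), with each substantive implication deferred to the literature exactly as you defer it. The glue you add explicitly --- the diagonalization showing a hard sequence defeats almost optimality, and Berman--Hartmanis $p$-isomorphism invariance to merge the ``exists'' and ``for all'' versions --- is the same machinery the paper itself relies on in the introduction and in Section~\ref{hardsequencessection}, so there is no genuine divergence or gap.
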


If any of the above statements hold, then $\textbf{P}\neq\textbf{NP}$ and $\textbf{EXP}\neq\textbf{NEXP}$.\cite{Krajicek} The converse is not known to hold. There are nondeterministic analogs of each of the above statements which imply $\textbf{NP}\neq\textbf{coNP}$ and $\textbf{NEXP}\neq\textbf{coNEXP}$. 

Statements (viii) and (ix) are more specific than the others and highlight an apparent linkage with noncomputability. These statements suggest that resource-bounded TMs have blind spots regarding the noncomputable languages ``nonhalting TMs'' and ``unprovable arithmetic sentences''. This paper goes further by exploring whether the blind spots are more severe, relating to a dense set of true unprovable arithmetic sentences.

The paper is organized as follows. Section \ref{hardsequencessection} shows that if one paddable $\textbf{coNP}$-complete language has dense hard sequences, they all do, and none are easy on average. Section \ref{densityunprovablesection} notes that true arithmetic sentences with no proof (of any size) have positive density among length $n$ sentences (Calude and J{\"u}rgensen) and suggests this is connected to the existence of dense hard sequences. Section \ref{tautsection} considers whether the set of tautologies has dense hard sequences. Section \ref{conclusion} concludes and identifies questions for further research. 

\section{Density of Hard Sequences}\label{hardsequencessection}
This section shows that a language is not easy on average if it has dense hard sequences, focusing initially on the example of $\texttt{coTHEOREMS}$ rather than $\texttt{TAUT}$ to highlight the role of hard sequences. Define $\texttt{coTHEOREMS}=\{\langle\phi,1^t\rangle|$ sentence $\phi$ has no formal proof in theory $\mathcal{T}$ with size $\leq t\}$. The theory $\mathcal{T}$ can be any finitely axiomatized, sound, consistent theory sufficiently strong to formalize arithmetic, such as Peano arithmetic (PA) or Zermelo Fraenkel set theory with the axiom of choice (ZFC).

If the statements in Theorem \ref{equivalentstatements} hold, then by statement (ix), there exists some hard sequence for $\texttt{coTHEOREMS}$ of the form $(\langle \phi,1^t\rangle)_{t\in \mathbb{N}}$, where $\phi$ is not provable in $\mathcal{T}$ (for any size proof). A statement of density rather than just existence is as follows. Fix an acceptable numbering of sentences $(\phi_i)$ in $\mathcal{T}$ for $i\in \mathbb{N}$. For TM $M$ accepting \texttt{coTHEOREMS}, let $H_M=\{i|(\langle \phi_i,1^t\rangle)_{t\in \mathbb{N}}$ is a hard sequence for $M\}$. Say that $H_M$ has \emph{positive upper density} if $\displaystyle\limsup_{n \rightarrow\infty}\frac{|H_M\cap\{1,2,\ldots,n\}|}{n}>0$. Denote this limit as $p_M$. Let $p_\texttt{coTHEOREMS}$ be the minimum $p_M$ over all $M$. Say that $\texttt{coTHEOREMS}$ has \emph{dense hard sequences} if for every $M$, $H_M$ has positive upper density, and $p_\texttt{coTHEOREMS}>0$. 

If $\texttt{coTHEOREMS}$ (or any other language) has dense hard sequences, then accepting it is not easy on average (in $\textbf{AvgP}$), with one proviso. Recall that a distributional problem $(L,\mathcal{D})$ (where $\mathcal{D}$ is an ensemble of distributions for length $n$ inputs) is in $\textbf{AvgP}$ if there is a TM $M$ accepting $L$ and constants $C$ and $\epsilon>0$ such for that every $n$: $\displaystyle \mathop{\mathbb{E}}_{y\in_R \mathcal{D}_n} \displaystyle\bigg[ \frac{T_M(y)^\epsilon}{n} \bigg]\leq C$.

A distribution that applied zero or negligible weight to the hard sequences would negate their effect on the average-case hardness of $\texttt{coTHEOREMS}$, so a technical assumption is needed to rule this out. Define a \emph{balanced} distribution as one that applies non-negligible weight to the dense hard sequences for $\texttt{coTHEOREMS}$---the probability weight applied to those length $n$ inputs appearing in a dense hard sequence exceeds some bound $c$ infinitely often. The uniform distribution is an obvious example.\footnote{The universal distribution is also balanced, because it applies at least probability weight $2^{-|x|}$ to any input $x$ and therefore applies non-negligible weight to any potential hard sequence. See Li and Vitanyi\cite{LiVitanyiBook}.} 

Then we have:

\begin{theorem}\label{notinAvgP}
If any language $L$ has dense hard sequences, then  $L\notin\textbf{AvgP}$ for a balanced distribution.
\end{theorem}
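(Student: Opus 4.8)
The plan is to argue by contradiction, leaning on the fact that dense hardness is quantified over \emph{every} machine accepting $L$, so I never have to choose the machine myself. Suppose $(L,\mathcal{D})\in\textbf{AvgP}$ for some balanced distribution $\mathcal{D}$, witnessed by a TM $M$ accepting $L$ together with constants $C,\epsilon>0$ such that $\mathbb{E}_{y\in_R\mathcal{D}_n}\big[T_M(y)^{\epsilon}/n\big]\leq C$ for all $n$. Because $L$ has dense hard sequences, this \emph{same} $M$ has $H_M$ of positive upper density $p_M\geq p_L>0$. Thus the very machine that is supposed to run fast on average is forced to be slow on a positive-density family of $\textbf{P}$-uniform inputs, and the two facts must collide.

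First I would convert the average-case guarantee into a tail bound. Rewriting the hypothesis as $\mathbb{E}_{y\in_R\mathcal{D}_n}\big[T_M(y)^{\epsilon}\big]\leq Cn$ and applying Markov's inequality gives, for every threshold $\tau$,
$$\Pr_{y\in_R\mathcal{D}_n}\big[T_M(y)>\tau\big]\ \leq\ \frac{\mathbb{E}_{y\in_R\mathcal{D}_n}\big[T_M(y)^{\epsilon}\big]}{\tau^{\epsilon}}\ \leq\ \frac{Cn}{\tau^{\epsilon}}.$$
Taking $\tau=n^{d}$ for any fixed $d>1/\epsilon$ yields $\Pr_{y\in_R\mathcal{D}_n}[T_M(y)>n^{d}]\leq C\,n^{1-d\epsilon}\to 0$. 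In other words, membership in $\textbf{AvgP}$ forces the $\mathcal{D}_n$-weight on inputs running longer than \emph{any} fixed polynomial in $n$ to vanish as $n\to\infty$; in particular it is eventually below the balancedness constant $c$.

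Next I would show the hard sequences supply exactly the weight this forbids. For a hard sequence $(x_s)_{s\in\mathbb{N}}$, the generating bound $|x_s|\leq s^{c'}$ transfers non-polynomial growth in $s$ into non-polynomial growth in $|x_s|$: since $s\geq|x_s|^{1/c'}$, for every $d$ there are infinitely many $s$ with $T_M(x_s)>|x_s|^{d}$. Because $H_M$ has positive upper density, infinitely many enumerated families contribute such superpolynomially hard elements, and the balanced hypothesis places $\mathcal{D}_n$-weight at least $c$ on the length-$n$ inputs lying in these families for infinitely many $n$. At those lengths the $\mathcal{D}_n$-weight on $\{y:T_M(y)>|y|^{d}\}$ is at least $c$, directly contradicting the vanishing tail from the previous step, and hence $(L,\mathcal{D})\notin\textbf{AvgP}$.

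The main obstacle is the bookkeeping that aligns three independent ``infinitely often'' statements: the lengths $n$ at which $\mathcal{D}$ concentrates weight $\geq c$, the indices $s$ at which a given sequence is genuinely superpolynomially hard, and the positive-density set $H_M$ of hard families. A single member of a hard sequence need not itself be hard—hardness is a property of the tail of $(T_M(x_s))_s$—so I must guarantee that at infinitely many \emph{balanced} lengths the weight $\geq c$ really sits on elements with $T_M(y)>|y|^{d}$ rather than on the incidentally easy members. I would handle this by refining the set of hard-sequence inputs at length $n$ to those exceeding the length-$n$ threshold $n^{d}$ from the Markov step, and then arguing that the positive density of $H_M$ together with balancedness leaves infinitely many lengths where this refined set still carries weight bounded away from $0$. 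Making this quantifier interleaving precise, rather than any single inequality, is the delicate part of the argument.
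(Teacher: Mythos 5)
Your proposal follows the same route as the paper's own proof: fix the machine $M$ witnessing the supposed \textbf{AvgP} bound, restrict attention to inputs lying on hard sequences for that very $M$, and play the superpolynomial growth of $T_M$ (with exponent $k>1/\epsilon$) against the weight that balancedness forces onto those inputs. Your Markov-inequality tail bound is just the contrapositive packaging of the paper's direct computation that $\mathbb{E}_{y\in_R \mathcal{D}_n}\big[T_M(y)^\epsilon/n\big]$ is unbounded along the balanced lengths; nothing substantive differs there, and your conversion of hardness in $s$ to hardness in $|x_s|$ via $s\geq |x_s|^{1/c'}$ is correct (the paper skips it).

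The place where you go beyond the paper is your final paragraph, and it cuts both ways. The quantifier-alignment problem you flag is genuine: hardness of a sequence only says $T_M(x_s)$ exceeds each polynomial infinitely often in $s$; balancedness only says hard-sequence inputs carry weight $\geq c$ at infinitely many lengths $n$; and nothing in the stated definitions forces these two ``infinitely often''s to meet. A distribution could concentrate its weight, at every balanced length, on the incidentally easy members of hard sequences, and positive upper density of $H_M$ does not exclude this---density is a statement about the indices $i$ of hard families, not about which lengths carry slow elements. So your proposed repair (``positive density of $H_M$ together with balancedness leaves infinitely many lengths where the refined set still carries weight'') is not a theorem under the paper's literal definitions. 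You should know, however, that the paper's own proof does not close this gap either: it asserts that ``for each sequence, $T_M(y)$ grows faster than $|y|^k$,'' treating an infinitely-often property as if it held at every element, and it leans on the definition of ``balanced,'' which the paper effectively tailors so that the theorem holds (compare the proof of the theorem that follows it: ``We have chosen definitions to make this true''). Your write-up is therefore at the same level of correctness as the paper's and is more candid about where the real difficulty sits; an airtight version needs a strengthened hypothesis, e.g., balancedness required to place weight $\geq c$ infinitely often on hard-sequence elements $y$ that actually satisfy $T_M(y)>|y|^{k}$, or hardness redefined as eventual rather than infinitely-often superpolynomial growth.
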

\begin{proof}
Fix $M$, $C$ and $\epsilon$, where $M$ accepts $L$ and calculate the expectation considering only inputs $y$ appearing in a hard sequence for $M$. For each sequence, $T_M(y)$ grows faster than  $|y|^k$, in particular for $k$ greater than $1/\epsilon$. Therefore, $\displaystyle \mathop{\mathbb{E}}_{y\in_R \mathcal{D}_n} \bigg[ \frac{ T_M(y)^\epsilon}{n} \bigg]$ is unbounded as $n$ tends to infinity. 
\end{proof}
The converse is not known to hold: the property ``has dense hard sequences'' appears to be stronger than ``is not in $\textbf{AvgP}$''. The literature has similarly noted that the property ``has hard sequences'' appears to be stronger than ``is not in $\textbf{P}$''.

If one paddable $\textbf{coNP}$-complete language has (not necessarily dense) hard sequences, then they all do (Chen et al\cite{ChenFlumMuller}), and none are in $\textbf{P}$. This result follows from the $p$-isomorphism of such languages, and the invariance of the property ``has hard sequences'' under $p$-isomorphism. We prove a similar result: if one paddable $\textbf{coNP}$-complete language has dense hard sequences, then they all do, and none are in $\textbf{AvgP}$ for a balanced distribution. We do so by extending the definitions for these terms from $\texttt{coTHEOREMS}$ to other $\textbf{coNP}$-complete languages so as to ensure invariance under $p$-isomorphism.

Let $L$ be any paddable $\textbf{coNP}$-complete language. A $p$-\emph{isomorphism} between languages $\texttt{coTHEOREMS}$ and $L$ is a bijective map $f:\texttt{coTHEOREMS}\rightarrow L$ such that $x\in\texttt{coTHEOREMS}$ if and only if $f(x)\in L$ and $f$ and its inverse can be computed in time polynomial in their arguments. Berman and Hartmanis\cite{BermanHartmanis} show that any two paddable $\textbf{coNP}$-complete languages are $p$-isomorphic. Thus, the set of paddable $\textbf{coNP}$-complete languages can be seen as a single language with different encoding schemes that take polynomial time to encode and decode. We will take this view, with $\texttt{coTHEOREMS}$ as that single language, and with all other paddable $\textbf{coNP}$-complete languages considered as being $\texttt{coTHEOREMS}$ with a different polynomial time encoding scheme $f$. By doing so, we are effectively treating $\texttt{coTHEOREMS}$ as a canonical form for paddable $\textbf{coNP}$-complete languages with the crucial information about the dense hard sequences.

For a paddable $\textbf{coNP}$-complete language $L$ other than $\texttt{coTHEOREMS}$ with $p$-isomorphism $f$, consider the sequences $(f(\langle\phi_i,1^t\rangle))_{t\in\mathbb{N}}$ in $L$, which will play the role of $(\langle\phi_i,1^t\rangle)_{t\in \mathbb{N}}$ in $\texttt{coTHEOREMS}$.\footnote{The same approach can be applied for $p$-isomorphisms of $\texttt{coTHEOREMS}$ with itself.} Say that $L$ has dense hard sequences if for any $M_L$ accepting $L$, a positive upper density of the sequences $(f(\langle\phi_i,1^t\rangle))_{t\in\mathbb{N}}$ over $i$ are hard sequences. Since $M= M_L\circ f$ is a TM accepting $\texttt{coTHEOREMS}$, this is equivalent to the statement ``$\texttt{coTHEOREMS}$ has dense hard sequences''. Say that a distribution over $L$ is balanced if it applies non-negligible weight to the dense hard sequences $(f(\langle\phi_i,1^t\rangle))_{t\in\mathbb{N}}$. This is crucial---the balanced distributions for $L$ are determined by $f$ and by the structure of dense hard sequences for $\texttt{coTHEOREMS}$.\footnote{To illustrate the role of this assumption, design a $\textbf{coNP}$-complete language $L$ to be easy on average for a uniform distribution. Let $\texttt{HC100}$ be those graphs, with a weight for each edge stored using 100 bits including leading zeros, for which there is a Hamilton Circuit in which each edge has weight one with 99 leading zeros. With uniformly random weights, no edge in a graph will have 99 leading zeros with probability one asymptotically, meaning they will be easily identified as being in $\texttt{coHC100}$, so this $\textbf{coNP}$-complete language would be easy on average with a uniform distribution. A balanced distribution would apply negligible weight to the easy cases for $\texttt{coHC100}$. We appreciate an anonymous Stack Exchange user for suggesting this construction.} Keep in mind that the uniform distribution for $\texttt{coTHEOREMS}$ is always balanced; the next section formulates a conjecture implying a clearer specification of what ``balanced'' means.

\begin{theorem}\label{ifonethenall}
Suppose one paddable $\textbf{coNP}$-complete language has dense hard sequences. Then they all do, and none are easy on average (in $\textbf{AvgP}$) for a balanced distribution.\footnote{It is possible to show with a weaker assumption that all paddable $\textbf{coNP}$- (and $\textbf{NP}$-) complete languages are not in $\textbf{AvgP}$, but with highly skewed distributions (per Luca Trevisan, private communication). If the statements in Theorem \ref{equivalentstatements} hold, then $\textbf{EXP}\neq\textbf{NEXP}$, there is a unary language $L$ in $\textbf{NP/P}$, and $L\notin\textbf{AvgP}$ assuming a distribution with probability one for the $1^n$ string. Suppose $L$ with that distribution has a length non-decreasing reduction to a distributional problem $(A,\mathcal{D})$, where $\mathcal{D}$ is uniform over length $n$ strings of the form $f(1^t)$ for $t\leq n$, if any exist, and applies probability one to $0^n$ otherwise. If $(A,\mathcal{D})\in\textbf{AvgP}$, then there is a worst-case polynomial time algorithm for $L$, which is a contradiction.}
\end{theorem}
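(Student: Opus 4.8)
The plan is to prove the two assertions separately: first that the property ``has dense hard sequences'' is invariant under the $p$-isomorphisms that connect all paddable $\textbf{coNP}$-complete languages, which gives ``they all do,'' and then to invoke Theorem \ref{notinAvgP} for the $\textbf{AvgP}$ conclusion. By the construction preceding the statement it suffices to treat $\texttt{coTHEOREMS}$ as the canonical language, so I would assume $\texttt{coTHEOREMS}$ has dense hard sequences and let $L$ be any other paddable $\textbf{coNP}$-complete language with $p$-isomorphism $f:\texttt{coTHEOREMS}\to L$ (Berman and Hartmanis\cite{BermanHartmanis}).

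The heart of the argument is to show that for every machine $M_L$ accepting $L$, the hard-index set $H_{M_L}=\{i\mid (f(\langle\phi_i,1^t\rangle))_{t}\text{ is a hard sequence for }M_L\}$ coincides with the hard-index set $H_M$ of the composed machine $M=M_L\circ f$, which accepts $\texttt{coTHEOREMS}$. First I would check that the image families $(f(\langle\phi_i,1^t\rangle))_t$ remain $\textbf{P}$-uniform: the generator for $\langle\phi_i,1^t\rangle$ (which merely writes the fixed string $\phi_i$ followed by $1^t$) composed with the polynomial-time map $f$ produces the $t$-th term in time polynomial in $t$. Next, since $T_{M}(x)=T_f(x)+T_{M_L}(f(x))$ with $T_f(x)\le|x|^{O(1)}$, and since $f$ and $f^{-1}$ being polynomial-time forces $|f(\langle\phi_i,1^t\rangle)|$ to be polynomially related to $t$ for each fixed $i$, the family $(T_M(\langle\phi_i,1^t\rangle))_t$ fails to be polynomially bounded exactly when $(T_{M_L}(f(\langle\phi_i,1^t\rangle)))_t$ does. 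Hence $i\in H_M\iff i\in H_{M_L}$, so $H_{M_L}=H_M$ as subsets of $\mathbb{N}$ and their upper densities are identical.

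With this identity in hand the transfer is immediate. Given any $M_L$ accepting $L$, the machine $M=M_L\circ f$ accepts $\texttt{coTHEOREMS}$, so by hypothesis $H_M$ has positive upper density, whence $H_{M_L}=H_M$ does too; as $M_L$ was arbitrary, $L$ has dense hard sequences. Moreover, the correspondence $M_L\mapsto M_L\circ f$ together with its inverse $M\mapsto M\circ f^{-1}$ shows that the sets of densities $\{p_{M_L}\}$ and $\{p_M\}$ coincide, so $p_L=p_{\texttt{coTHEOREMS}}>0$. Routing through $\texttt{coTHEOREMS}$ also yields the ``one implies all'' claim: if any single paddable $\textbf{coNP}$-complete language has dense hard sequences then, by the definitional equivalence recorded above, so does $\texttt{coTHEOREMS}$, and hence so do all of them.

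Finally, each such $L$ has dense hard sequences and, by definition, a balanced distribution places non-negligible weight on the families $(f(\langle\phi_i,1^t\rangle))_t$, so Theorem \ref{notinAvgP} gives $L\notin\textbf{AvgP}$ for that distribution. The main obstacle is the middle step: verifying that superpolynomial growth of the running time, $\textbf{P}$-uniformity of the generating family, and the indexing by $i$ are each preserved \emph{exactly} under $f$, so that the hard-index sets—and therefore their densities—are literally equal rather than merely comparable. Once that bookkeeping is pinned down, both conclusions follow with no further estimates.
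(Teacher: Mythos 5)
Your proposal is correct and follows essentially the same route as the paper: the paper defines ``dense hard sequences'' and ``balanced'' for an arbitrary paddable $\textbf{coNP}$-complete $L$ directly through the Berman--Hartmanis isomorphism $f$ and the composition $M = M_L \circ f$ with $\texttt{coTHEOREMS}$ as canonical form, so its official proof is literally the one-liner ``We have chosen definitions to make this true.'' Your write-up supplies the bookkeeping that the paper's definitional paragraph asserts without verification (preservation of $\textbf{P}$-uniformity under $f$, polynomial relatedness of $t$ and $|f(\langle\phi_i,1^t\rangle)|$, equality of the hard-index sets $H_{M_L}$ and $H_{M_L\circ f}$, and hence of the densities) and then invokes Theorem \ref{notinAvgP} exactly as the paper does.
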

\begin{proof}
We have chosen definitions to make this true.
\end{proof}

Theorem \ref{ifonethenall} proves a strong conclusion with little work, by comparison with the literature on average-case completeness for $\textbf{NP}$.\cite{BogdanovTrevisan} Without defining reductions and completeness, the theorem already identifies what are essentially a set of complete languages. It follows immediately from Theorem \ref{ifonethenall}:

\begin{corollary}
If $\texttt{coTHEOREMS}$ has dense hard sequences, then $\texttt{TAUT}$ has dense hard sequences and is not easy on average (in $\textbf{AvgP}$) for a balanced distribution. 
\end{corollary}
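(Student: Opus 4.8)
The plan is to obtain the corollary as a direct instance of Theorem \ref{ifonethenall}, so the work reduces to checking that both languages satisfy its hypotheses and then unfolding the definitions that make the transfer of hardness explicit. First I would observe that $\texttt{TAUT}$ is a paddable $\textbf{coNP}$-complete language: it is the canonical $\textbf{coNP}$-complete set, and it is paddable since a propositional formula $\tau$ can be padded to an equivalent tautology (for instance by conjoining tautological clauses over fresh variables) in a length-injective, polynomial-time-invertible fashion. Likewise $\texttt{coTHEOREMS}$ is paddable and $\textbf{coNP}$-complete, padding being accomplished by inflating the unary bound $1^t$ or the sentence $\phi$. Consequently both languages fall within the scope of Theorem \ref{ifonethenall}.

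Second, I would invoke the hypothesis: $\texttt{coTHEOREMS}$ has dense hard sequences, so the premise of Theorem \ref{ifonethenall}---that some paddable $\textbf{coNP}$-complete language has dense hard sequences---is met. The theorem then yields that every paddable $\textbf{coNP}$-complete language, and in particular $\texttt{TAUT}$, has dense hard sequences and lies outside $\textbf{AvgP}$ for a balanced distribution.

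Third, to make the transfer concrete rather than merely formal, I would unfold the definitions established just before Theorem \ref{ifonethenall}. Fix the $p$-isomorphism $f:\texttt{coTHEOREMS}\to\texttt{TAUT}$ guaranteed by Berman and Hartmanis. The sequences $(f(\langle\phi_i,1^t\rangle))_{t\in\mathbb{N}}$ play in $\texttt{TAUT}$ the role that $(\langle\phi_i,1^t\rangle)_{t\in\mathbb{N}}$ plays in $\texttt{coTHEOREMS}$; since any $M$ accepting $\texttt{TAUT}$ induces $M\circ f$ accepting $\texttt{coTHEOREMS}$, a positive upper density of these image sequences are hard for $M$ exactly when the corresponding density holds for $\texttt{coTHEOREMS}$. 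A balanced distribution for $\texttt{TAUT}$ is by definition one assigning non-negligible weight to these image sequences, and the expectation argument of Theorem \ref{notinAvgP} then gives $\texttt{TAUT}\notin\textbf{AvgP}$ for it.

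The main obstacle is not in the deduction---once the definitions are in place the corollary is immediate---but in verifying the hypotheses cleanly and stating the conclusion relative to the correct notion of ``balanced''. One must confirm that $\texttt{TAUT}$ is genuinely paddable in the Berman--Hartmanis sense, and, more importantly, that the induced class of balanced distributions on $\texttt{TAUT}$ is the one determined by $f$ and the hard-sequence structure on $\texttt{coTHEOREMS}$, rather than, say, the uniform distribution on propositional formulas, which need not be balanced. Care is therefore required to phrase ``not easy on average for a balanced distribution'' with respect to this $f$-induced class, exactly as the preceding definitions arrange.
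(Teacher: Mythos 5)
Your proposal is correct and follows exactly the paper's route: the paper derives this corollary as an immediate instance of Theorem \ref{ifonethenall} (whose proof is itself ``we have chosen definitions to make this true''), which is precisely your strategy. Your additional unfolding of the $p$-isomorphism $f$, the image sequences $(f(\langle\phi_i,1^t\rangle))_{t\in\mathbb{N}}$, and the $f$-induced notion of balanced distribution simply makes explicit the definitional machinery the paper set up before stating the theorem, including the caveat that ``balanced'' for $\texttt{TAUT}$ need not coincide with the uniform distribution on formulas.
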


The next section provides an argument why we should expect $\texttt{TAUT}$ to have dense hard sequences and which ones they are.
\section{Density of True Unprovable Sentences}\label{densityunprovablesection}
G{\"o}del's First Incompleteness Theorem's shows that undecidable sentences exist for any sound, consistent theory, which raises the question of how prevalent they are. This section will show that the true unprovable (for any size proof) sentences predicted by Chaitin's version of the Incompleteness Theorem are dense, following Calude and J{\"u}rgensen\cite{CaludeJurgensen}, and extend that result. It then formulates a conjecture that sequences associated with these true unprovable sentences (for proofs of any size) are dense hard sequences for $\texttt{coTHEOREMS}$ (for proofs of size $\leq t$). The next section shows that this conjecture implies that $\texttt{TAUT}$ has dense hard sequences.

Chaitin\cite{ChaitinIncompleteness} showed that although most strings have high Kolmogorov complexity, no sound, consistent, finitely axiomatized theory can prove any string has Kolmogorov complexity above a threshold.\footnote{Following the literature, we will use this term rather than Chaitin complexity, although the result builds on Chaitin's work.\cite{ChaitinIncompleteness} See Li and Vitanyi\cite{LiVitanyiBook} Section 2.7.1.} A counting argument due to Calude and J{\"u}rgensen\cite{CaludeJurgensen} shows that the set of sentences stating ``string $x$ has high Kolmogorov complexity'' is not only infinite but dense, because the set of strings with high Kolmogorov complexity is itself dense. 

Fix a universal TM $U$. Strings are encoded in binary (essential for the proof); let $|x|$ be the length of string $x$. Define the plain Kolmogorov complexity $C(x)=\min\{|p|:U(p)=x\}$.\footnote{For more background on Kolmogorov complexity, see Li and Vitanyi\cite{LiVitanyiBook} and Calude\cite{CaludeInfoRandomness}.} Define the set of random strings as $R_C=\{x:C(x)\geq \frac{|x|}{2}\}$.\footnote{An extensive literature has examined the power of random strings, for instance, Hirihara\cite{Hirahara}. Other definitions of random strings that have been used in the literature would work equally well for this paper's results.} Let $r_x$ be a sentence in the language of a sound, consistent, finitely-axiomatized theory $\mathcal{T}$ that encodes ``$x\in R_C$''. By Chaitin's Incompleteness Theorem, theory $\mathcal{T}$ cannot prove $r_x$ for $|x|$ sufficiently large.\cite{ChaitinIncompleteness} Calude and J{\"u}rgensen show:\footnote{See the proof of Theorem 5.2.} 

\begin{theorem}\label{phiupperdensity}
The set $H=\{i|\phi_i=r_x\}$ has positive upper density.
\end{theorem}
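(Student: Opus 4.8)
The plan is to transfer the density of the random strings $R_C$ among binary strings to the density of the randomness-claim sentences $r_x$ among all sentences of $\mathcal{T}$, via a length-matched counting argument. First I would record the elementary bound behind the density of $R_C$: a string $x$ of length $n$ fails to lie in $R_C$ only if $C(x) < n/2$, and at most $\sum_{k < n/2} 2^k < 2^{(n+1)/2}$ strings can be output by programs shorter than $n/2$. Hence $|R_C \cap \{0,1\}^n| \geq 2^n - 2^{(n+1)/2}$, so the fraction of length-$n$ strings lying in $R_C$ tends to $1$.

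Next I would pin down the two structural facts about the encoding that make the transfer work. The map $x \mapsto r_x$ is injective (distinct strings give distinct sentences), and—because $x$ is written in binary inside a fixed sentence template expressing ``$C(x) \geq |x|/2$''—the length satisfies $|r_x| \leq |x| + b$ for an absolute constant $b$. Taking the numbering $(\phi_i)$ to list sentences in shortlex (length-increasing) order, the index of any sentence of length $\leq \ell$ is at most the total number $N(\ell)$ of sentences of length $\leq \ell$, and over a binary sentence alphabet $N(\ell) < 2^{\ell+1}$.

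I would then compare the two counts at lengths $\ell(n) = n + b$. Every $x \in R_C \cap \{0,1\}^n$ yields a distinct sentence $r_x$ with $|r_x| \leq \ell(n)$, so among the first $N = N(\ell(n)) < 2^{n+b+1}$ sentences at least $2^n - 2^{(n+1)/2}$ lie in $H$. Thus $|H \cap \{1,\dots,N\}|/N \geq (1 - 2^{(1-n)/2})/2^{b+1}$, which stays bounded below by a positive constant. Since the upper density is a $\limsup$ over all $N$, evaluating along the subsequence $N = N(\ell(n))$ and letting $n \to \infty$ gives $\limsup_{N} |H \cap \{1,\dots,N\}|/N \geq 2^{-(b+1)} > 0$, which is the claim. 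Counting only the $r_x$ with $x \in R_C$ (the true, and hence by Chaitin unprovable, sentences) rather than all $x$ costs nothing, because those already have density tending to $1$ within each length; this is the form relevant to the rest of the paper.

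The step I expect to be the main obstacle is the length-matched counting itself: the bound is positive only because the exponential count $2^n$ of random strings keeps pace with the exponential count of competing sentences of length $\leq \ell(n)$. This balance is exactly what forces both the binary encoding of $x$ (so that $|r_x| \leq |x| + O(1)$, rather than a near-$|x|\log|x|$ cost or the exponential blow-up of a unary numeral) and a length-respecting numbering; an inflated encoding or a larger effective alphabet would drive $2^n/N(\ell(n))$ to $0$ and destroy positive density. Making the encoding and numbering conventions precise, and verifying $|r_x| \leq |x| + O(1)$ for the chosen template, is where the real care is needed.
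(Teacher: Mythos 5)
Your proposal is correct and takes essentially the same route as the paper's proof: a length-matched counting argument showing that random strings of length $n$ have density tending to $1$, that the encoding $x \mapsto r_x$ costs only a constant number $c$ of extra bits, and hence that the sentences $r_x$ occupy a fraction of roughly $2^{-c}$ of all sentences of the relevant length, which yields positive upper density along that subsequence of indices. The only difference is one of rigor, in your favor: you make explicit the length-respecting (shortlex) numbering and the bound $N(\ell) < 2^{\ell+1}$ on the number of competing sentences, assumptions the paper's sketch uses implicitly when it speaks of ``the share of length $n$ sentences.''
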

\begin{proof}
We have $|r_x|=|x|+c$ for some $c$, where $c$ is the overhead in bits to formalize arithmetically that string $x$ is Kolmogorov random. The essence of the proof is that we can replace $x$ with any other Kolmogorov random string of the same length. Nearly all strings of length $|x|$ are random as $|x|$ grows large (a share of $1-2^{-\frac{|x|}{2}}$), by a counting argument.\footnote{There are only $2^{\frac{|x|}{2}-1}$ possible short strings, so most strings of length $|x|$ cannot be compressed.} Therefore, there are nearly $2^{|x|}$ random strings to fill in the length $|x|$ slot in $r_x$. Then, the share of length $n$ sentences that are true but unprovable is bounded below by a number slightly less than $2^{-c}$.
\end{proof}
By a similar argument, the set of arithmetic sentences formalizing ``string $x$ has non-zero length'', which are true and provable, has positive upper density.  Downey et al\cite{Downeyetal} refer to such sets as being partially computable at density $r$.

We now use Theorem \ref{phiupperdensity} to motivate a conjecture about dense hard sequences. Just as $\mathcal{T}$ has difficulty with sentence $\phi_i=r_x$ (it cannot prove it for $\frac{|x|}{2}$ sufficiently large), we similarly might expect that nondeterministic $M$ has difficulty (requires superpolynomial time) with $\langle\phi_i,1^t\rangle$ where $\phi_i=r_x$ for $\frac{|x|}{2}$ sufficiently large.\footnote{We specify that $M$ is nondeterministic here and below, which will facilitate discussion of propositional proof systems in the next section.} That is, we might expect that $H$ is a blind spot for $M$ and not only $\mathcal{T}$.

Define set $H'$ to include $H$ as well as any other dense sets of true unprovable sentences in $\mathcal{T}$. $H'$ includes sentences $r_x$ for any universal TM $U$ used to define $C(.)$ and $R_C$. The set of $U$ that are universal is noncomputable because ``is universal'' is a nontrivial property, so by Rice's Theorem $H'$ is noncomputable. Furthermore, we define $H'$ to include dense sets relative to  $U$ with an arbitrary iterations of the Turing jump (i.e., adding an oracle for halting). 

\begin{conjecture}\label{superpolynomiallyhard}
For any nondeterministic $M$ accepting $\texttt{coTHEOREMS}$, there is a dense subset $H'_M$ of $H'$ for which $i\in H'_M$ if and only if $(\langle\phi_i,1^t\rangle)_{t\in \mathbb{N}}$ is a hard sequence for $M$.
\end{conjecture}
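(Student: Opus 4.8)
The plan is to prove the conjecture in two stages: first produce a single randomness sentence whose associated sequence is hard for $M$, then \emph{spread} that hardness across a positive-density family of randomness sentences, mirroring the interchangeability of Kolmogorov-random strings exploited in Theorem~\ref{phiupperdensity}. For the first stage I would invoke statement~(ix) of Theorem~\ref{equivalentstatements}: assuming those equivalent statements hold, every nondeterministic $M$ accepting $\texttt{coTHEOREMS}$ admits some sentence $\phi'$, unprovable in $\mathcal{T}$ at any size, with $T_M(\langle\phi',1^t\rangle)$ superpolynomial in $t$. Since the content of the conjecture is the upgrade from this single $\phi'$ to a \emph{dense} subset $H'_M\subseteq H'$, the real work lies in the density argument, not the existence one.

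For the spreading stage I would argue by contraposition. Suppose $H'_M$ has upper density zero; then, outside a density-zero exceptional set, every randomness sentence $r_x$ is \emph{easy} for $M$, i.e. $T_M(\langle r_x,1^t\rangle)$ is polynomially bounded in $t$ uniformly as $t\to\infty$. Because $M$ is nondeterministic, each such accepting computation is a short witness that $r_x$ has no $\mathcal{T}$-proof of size $\le t$; letting $t\to\infty$ these coalesce into short, $t$-independent certificates that $r_x$ is outright unprovable. Equivalently, anticipating Section~\ref{tautsection}, the tautologies encoding ``$r_x$ has no $\mathcal{T}$-proof of size $\le t$'' would admit polynomial-size proofs in the proof system attached to $M$. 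I would then collect these certificates, available for almost all random $x$, into one uniform procedure and derive a contradiction with the noncomputability of $H'$.

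The noncomputability of $H'$ is the lever. By construction $H'$ assembles the randomness sentences relative to every universal machine and to arbitrarily many iterated Turing jumps, and is noncomputable by Rice's Theorem. The contradiction I would aim for is that a fast $M$ on a dense subfamily of $H'$ yields an effective---or finite-jump-effective---membership test for a set that is not decidable at any fixed level of the jump hierarchy. Concretely I would match each iterated jump appearing in the definition of $H'$ against the logical complexity of the certificates extracted from $M$, so that a uniform certificate scheme of bounded quantifier complexity would decide randomness relative to a jump it cannot access; this is the formal rendering of the paper's slogan that resource-bounded machines have \emph{blind spots} at noncomputable sets.

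The step that manufactures this contradiction is the main obstacle, and the reason the statement is posed as a conjecture. The difficulty is sharpened by Chaitin's Incompleteness Theorem itself\cite{ChaitinIncompleteness}: the fact that $\mathcal{T}$ cannot prove ``$C(x)\ge n$'' once $n$ exceeds a fixed constant $c_{\mathcal{T}}$ is a single \emph{uniform and computable} fact, so an $M$ that hard-codes $c_{\mathcal{T}}$ can shortcut every unrelativized randomness sentence in polynomial time---which would already make $H$ itself easy and defeat any spreading argument confined to $H$. Consequently the proof must lean essentially on the relativized, jump-indexed sentences in $H'\setminus H$, where the analogous Chaitin constant is not computable by $M$; converting that non-uniformity into a genuine superpolynomial lower bound is tantamount to proving a propositional proof-size lower bound, and is the crux that remains open.
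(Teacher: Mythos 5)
The statement you set out to prove is posed in the paper as Conjecture~\ref{superpolynomiallyhard}; the paper offers no proof of it, and it cannot have a short one, since (applied to deterministic $M$ as a special case of nondeterministic $M$) it implies statement~(ix) of Theorem~\ref{equivalentstatements} and hence all of its equivalents --- no $p$-optimal proof system, $\textbf{P}\neq\textbf{NP}$, $\textbf{EXP}\neq\textbf{NEXP}$, and, via the nondeterministic analogs, $\textbf{NP}\neq\textbf{coNP}$. You are candid that your final contradiction ``remains open,'' so what you have is a strategy rather than a proof; but beyond the admitted gap, two of your intermediate steps are broken, not merely unfinished. First, the coalescing step fails: if $T_M(\langle r_x,1^t\rangle)$ is polynomially bounded, you obtain, for each $t$, a short accepting computation witnessing ``$r_x$ has no $\mathcal{T}$-proof of size $\le t$''; these are infinitely many separate certificates, one per $t$, and nothing allows them to ``coalesce into short, $t$-independent certificates that $r_x$ is outright unprovable.'' Unprovability at all sizes is a $\Pi_1$ statement (indeed $\Pi_1$-complete), while each certificate attests only to a finite approximation; a machine that accepts every finite approximation quickly yields no finite witness for the infinite conjunction. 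Second, the noncomputability lever never engages: Rice's Theorem tells you $H'$ is undecidable, but a machine $M$ that happens to run fast on inputs indexed by $H'$ does not thereby decide $H'$. To decide whether $i\in H'_M$ one must determine whether $T_M(\langle\phi_i,1^t\rangle)$ is superpolynomial, a question quantifying over all $t$; your ``uniform certificate scheme'' never reduces this to membership in any set of bounded jump complexity, so no contradiction with the undecidability of $H'$ is actually produced. The two ingredients you plan to combine do not, as stated, react.

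On the credit side, your observation about Chaitin's constant is correct, sharp, and exactly matches the paper's motivation for defining $H'$ as it does: since the threshold $c_{\mathcal{T}}$ is a fixed computable constant, a machine that hard-codes it and syntactically recognizes sentences of the form $r_x$ can accept $\langle r_x,1^t\rangle$ immediately once $|x|$ is large, so the unrelativized family $H$ alone can never furnish hard sequences for \emph{every} $M$, and any correct argument must exploit the portion of $H'$ built from arbitrary universal machines and iterated jumps, where no single computable threshold is available. This also suggests the right reformulation of your contrapositive, and of the real open problem: show that no single $M$ can hard-code enough such ``Chaitin shortcuts'' to cover a density-one portion of $H'$. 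Making that counting-plus-uniformity argument rigorous is precisely what separates the conjecture from a theorem, and your write-up correctly locates it as the crux.
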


Conjecture \ref{superpolynomiallyhard} implies that  $\texttt{coTHEOREMS}$ has dense hard sequences. Note that $H'$ and therefore the term ``balanced'' may not be well defined---there may be dense sets of true unprovable sentences beyond those identified above, and which are inherently difficult to describe. This works in our favor, as the misbehavior of $H'$ strengthens the rationale that no $M$ could accept $\texttt{coTHEOREMS}$ in polynomial time.\footnote{The literature has studied the statement ``there is no derivation of contradiction of length $n$ from the axioms of $\mathcal{T};$\cite{PudlakFiniteDomain} statements related to $H'$ would also be of interest.}

The following stronger conjecture states that $M$'s blind spot about $H'$ is extremal:
\begin{conjecture}\label{exponentiallyhard}
Given nondeterministic $M$ accepting $\texttt{coTHEOREMS}$, there exists a dense subset $H'_M$ of $H'$, such that for any sufficiently large $|x|$ which depends on $M$ and $\mathcal{T}$, $M$ requires $2^t$ steps for all $t$ on any input $\langle\phi_i,1^t\rangle$ where $i\in H'_M$.
\end{conjecture}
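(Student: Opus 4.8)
The plan is to obtain Conjecture \ref{exponentiallyhard} as an extremal sharpening of Conjecture \ref{superpolynomiallyhard}: rather than merely ruling out polynomial time on a dense set, I would show that on a dense set of random-string sentences no nondeterministic $M$ can beat the brute-force search over all candidate $\mathcal{T}$-proofs of size $\leq t$. First I would recast the statement as a proof-size lower bound. Since a nondeterministic $M$ accepting \texttt{coTHEOREMS} is itself a proof system, with the accepting computation path playing the role of the proof, the bound $T_M(\langle r_x,1^t\rangle)\geq 2^t$ says exactly that the shortest certificate of ``$r_x$ has no $\mathcal{T}$-proof of size $\leq t$'' has length at least $2^t$. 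The naive certificate---exhibit that none of the $\approx 2^t$ strings of size $\leq t$ is a $\mathcal{T}$-proof of $r_x$---has size $\approx 2^t$, so $2^t$ is the natural target and the claim is that brute force is essentially optimal for the sentences in $H'$.

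The mechanism I would use to force this lower bound is incompressibility together with the density transfer already present in Theorem \ref{phiupperdensity}. Arguing by contradiction, suppose the slow set had upper density zero, so that for a density-one set of random $x$ the machine $M$ produces, for some $t$, an accepting path shorter than $2^t$. Such a path is a short certificate that $r_x$ is unprovable up to size $t$; the goal is to convert it, using the self-referential content of $r_x$ (which asserts $C(x)\geq |x|/2$) together with fixed descriptions of $M$ and $\mathcal{T}$, into a description of $x$ shorter than $|x|/2$, contradicting $x\in R_C$. Since by Theorem \ref{phiupperdensity} one may substitute any random string of the same length, the $x$ witnessing the contradiction would occur with positive upper density, and the surviving slow indices would form the required dense $H'_M$. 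The threshold ``sufficiently large $|x|$ depending on $M$ and $\mathcal{T}$'' should then emerge from the additive constant $c$ of Theorem \ref{phiupperdensity}, the Chaitin threshold below which $\mathcal{T}$ still proves $r_x$, and the length of the fixed decompressor that reconstructs $x$ from $M$'s certificate.

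The hard part---and the reason this remains a conjecture rather than a theorem---is precisely the compression step. There is no evident mechanism forcing a short accepting path for ``$r_x$ unprovable up to size $t$'' to encode $x$: the same unprovability fact holds simultaneously for almost every string of a given length, so a short certificate need not single $x$ out at all, and we do not even control which $t$ supplies the short certificate, so for large $t$ a sub-$2^t$ accepting path may still be far longer than $|x|$ and compress nothing. Worse, any unconditional proof of this bound would in particular establish every statement of Theorem \ref{equivalentstatements}, hence $\textbf{P}\neq\textbf{NP}$ and $\textbf{EXP}\neq\textbf{NEXP}$, so a direct attack must overcome the known lower-bound barriers. I therefore expect the realistic route to be conditional: to derive Conjecture \ref{exponentiallyhard} from a clean meta-complexity hypothesis---that certifying Kolmogorov randomness up to a proof bound is essentially as hard as the underlying noncomputable problem---thereby making precise the paper's thesis that this blind spot is a manifestation of the noncomputability of $R_C$.
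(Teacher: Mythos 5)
There is no paper proof to compare yours against: the statement is posed as Conjecture~\ref{exponentiallyhard}, an open extremal strengthening of Conjecture~\ref{superpolynomiallyhard}, and the only support the paper offers is the remark that it has the same shape as Chaitin's Incompleteness Theorem---an absolute length threshold beyond which $M$ ``has no information at all, and must apply brute force for all $t$.'' Your submission is, correctly, not a proof but a proof plan that ends in an honest admission of failure, and your diagnosis of where it fails is exactly right. A sub-$2^t$ accepting path for $\langle r_x,1^t\rangle$ certifies a fact (``$r_x$ has no $\mathcal{T}$-proof of size $\leq t$'') that holds simultaneously for almost every string of length $|x|$, so nothing forces it to encode $x$, and the incompressibility contradiction never gets off the ground; moreover the $t$ supplying the short path is not under your control, so the certificate may be far longer than $|x|$ anyway. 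Your barrier observation is also sound: the conjecture implies dense hard sequences for $\texttt{coTHEOREMS}$, hence statement (ix) of Theorem~\ref{equivalentstatements} and all its equivalents (indeed, since $M$ is nondeterministic, the stronger nondeterministic analogs, giving $\textbf{NP}\neq\textbf{coNP}$), so no unconditional argument of the sketched kind can succeed with known techniques. This is precisely why the paper leaves it as a conjecture.

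Two smaller points. First, your reduction of the statement to ``brute force is optimal for the certificate length'' is a faithful and useful reformulation of the paper's informal gloss, and your proposal to derive the conjecture from a cleaner meta-complexity hypothesis goes beyond anything in the paper---the paper derives nothing from Conjecture~\ref{exponentiallyhard} except the observation, in the conclusion, that its \emph{failure} on a given $\phi_i$ is computably enumerable, which could make hard sequences constructible if that c.e.\ set were complete. Second, if you do formalize the contradiction setup, keep the quantifiers straight: the negation of the conjecture only says that the ``slow for all $t$'' subset of $H'$ fails to be dense (upper density zero), from which you get that the fast indices have positive upper density inside $H'$, not a density-one set of fast $x$ for a single uniform $t$; your sketch elides this, though it is not the step on which the plan founders.
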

Conjecture \ref{exponentiallyhard} has a similar form to Chaitin's Incompleteness Theorem, with an absolute upper threshold for the size of strings beyond which $M$ has no information at all, and must apply brute force for all $t$.
\section{Dense Hard Sequences for Tautologies}\label{tautsection}
The section discusses Conjecture \ref{superpolynomiallyhard}'s implications for propositional proof systems.\footnote{There is an extensive literature considering families of tautologies that may be hard, surveyed in Kraj\'{\i}\v{c}ek\cite{Krajicekproof}.}

\begin{theorem}
If Conjecture \ref{superpolynomiallyhard} holds, then: (i) $\texttt{TAUT}$ has dense hard sequences for nondeterministic TMs and for proof systems; (ii) it is not easy on average for a nondeterministic TM to recognize tautologies, i.e., $\texttt{TAUT}\notin\textbf{AvgP}$; and (iii)~in particular, for any TM $M$ (propositional proof system $Q$), a dense subset of tautologies which encode ``there is no $\mathcal{T}$ proof of size $\leq t$ shows that string $x$ is Kolmogorov random'' are hard sequences for $M$ (for $Q$).
\end{theorem}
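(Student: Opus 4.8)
The plan is to derive all three parts by threading Conjecture \ref{superpolynomiallyhard} through the reduction from $\texttt{coTHEOREMS}$ to $\texttt{TAUT}$ together with the correspondence between nondeterministic machines and proof systems, so that essentially no machinery beyond the Corollary to Theorem \ref{ifonethenall} and Theorem \ref{notinAvgP} is required. Everything is carried out in the nondeterministic setting, since Conjecture \ref{superpolynomiallyhard} concerns nondeterministic $M$ and states precisely that $\texttt{coTHEOREMS}$ has dense hard sequences, with dense index set $H'_M \subseteq H'$. Because $\texttt{TAUT}$ is paddable $\textbf{coNP}$-complete, the Corollary gives at once that $\texttt{TAUT}$ has dense hard sequences, which is part (i) for machines, and Theorem \ref{notinAvgP} applied to $\texttt{TAUT}$ with a balanced distribution gives part (ii).

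To obtain part (iii) explicitly rather than by the definitional fiat used in Theorem \ref{ifonethenall}, I would work with the natural polynomial-time reduction $g : \texttt{coTHEOREMS} \to \texttt{TAUT}$ that sends $\langle \phi, 1^t\rangle$ to a formula $\tau_{\phi,t}$ which is a tautology if and only if $\phi$ has no $\mathcal{T}$-proof of size $\leq t$; when $\phi = r_x$ this $\tau_{r_x,t}$ by construction encodes ``there is no $\mathcal{T}$ proof of size $\leq t$ showing that string $x$ is Kolmogorov random.'' Given any nondeterministic $N$ accepting $\texttt{TAUT}$, put $M = N \circ g$, which accepts $\texttt{coTHEOREMS}$. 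Since $g$ runs in polynomial time, $T_M(\langle \phi_i, 1^t\rangle)$ and $T_N(g(\langle \phi_i, 1^t\rangle))$ differ by at most a polynomial, so each hard sequence of $M$ pushes forward to a hard sequence $(g(\langle \phi_i, 1^t\rangle))_t = (\tau_{\phi_i,t})_t$ of $N$ carrying the same index $i$. Applying Conjecture \ref{superpolynomiallyhard} to $M$ yields a positive-upper-density set $H'_M \subseteq H'$ of indices with $\phi_i = r_x$; as density is measured over $i$ and $g$ leaves $i$ untouched, these same indices exhibit dense hard sequences of the tautologies $\tau_{r_x,t}$ for $N$, which is part (iii) for machines.

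For the proof-system formulations of (i) and (iii) I would invoke the coincidence noted in the introduction: a nondeterministic $N$ accepting $\texttt{TAUT}$ induces a proof system whose proof length for $\tau$ matches, up to a polynomial, the length of $N$'s shortest accepting path on $\tau$, and conversely every proof system $Q$ arises from such an $N$. Thus ``hard sequence for $N$'' and ``hard sequence for the associated $Q$'' name the same families up to polynomial factors, and the positive-upper-density index sets transfer verbatim, upgrading (i) and (iii) to their proof-system forms and completing the theorem.

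The step I expect to be most delicate is confirming that the push-forward through $g$ truly preserves positive upper density and does not convert superpolynomial running times into polynomial ones. The saving observation is that dense hard sequences are counted by the sentence index $i$, which $g$ does not alter, so no injectivity-on-lengths or length-monotonicity argument is needed; it then suffices to verify that $(g(\langle \phi_i, 1^t\rangle))_t$ remains a legitimate $\textbf{P}$-uniform family (its generator is $g$ composed with the generator of $(\langle \phi_i, 1^t\rangle)_t$) and that the fixed polynomial overhead of $g$ cannot mask a superpolynomial lower bound on $T_N \circ g$. Both are routine, so the real content of the theorem is the bookkeeping that links the already-established Corollary to the machine/proof-system dictionary.
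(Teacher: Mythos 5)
Your proposal is correct, and for parts (i) and (ii) it coincides with the paper's proof: both derive them from the Corollary to Theorem \ref{ifonethenall} and from Theorem \ref{notinAvgP} with a balanced distribution, working in the nondeterministic setting. Where you genuinely diverge is part (iii). The paper chooses a Berman--Hartmanis $p$-isomorphism $f:\texttt{coTHEOREMS}\rightarrow\texttt{TAUT}$ and observes that, by Conjecture \ref{superpolynomiallyhard} and the way Section \ref{hardsequencessection} \emph{defines} dense hard sequences for languages other than $\texttt{coTHEOREMS}$ (namely, as images under $f$), the families $(f(\langle\phi_i,1^t\rangle))_{t\in\mathbb{N}}$ with $i\in H'$ are automatically the required hard sequences; the claim that they ``encode'' the Kolmogorov-randomness statement is inherited through the isomorphism. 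You instead use a natural many-one reduction $g$ whose outputs encode ``there is no $\mathcal{T}$ proof of size $\leq t$'' by construction, and you prove the transfer explicitly: given $N$ accepting $\texttt{TAUT}$, the composite $M=N\circ g$ accepts $\texttt{coTHEOREMS}$, the conjecture supplies a dense $H'_M$, and since $g$ is polynomial-time (so it can neither absorb a superpolynomial lower bound nor destroy $\textbf{P}$-uniformity) and leaves the index $i$ untouched, each hard sequence for $M$ pushes forward to a hard sequence of tautologies for $N$ at the same density. This buys you two things the paper leaves implicit: the ``encode'' clause of (iii) holds by construction rather than by fiat of the isomorphism, and the argument for (iii) does not lean on the definitional convention behind Theorem \ref{ifonethenall} (your $g$ need not be bijective). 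What the paper's route buys is brevity and exact consistency with its own definition of ``dense hard sequences for $\texttt{TAUT}$.'' Both proofs invoke the same nondeterministic-TM/proof-system dictionary from the introduction to obtain the proof-system halves of (i) and (iii), and there your treatment matches the paper's.
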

\begin{proof}
(i) If Conjecture \ref{superpolynomiallyhard} holds, then $\texttt{coTHEOREMS}$ has dense hard sequences for nondeterministic TMs. By Theorem \ref{ifonethenall}, equivalently $\texttt{TAUT}$ has dense hard sequences for nondeterministic TMs. (ii) This follows from Theorem \ref{notinAvgP}. (iii) Choose a $p$-isomorphism $f$ mapping $\texttt{coTHEOREMS}$ to $\texttt{TAUT}$. By the conjecture, for a dense set of $i\in H'$, the sequences of tautologies  $(f(\langle\phi_i,1^t\rangle))_{t\in\mathbb{N}}$ are dense hard sequences for nondeterministic TMs and encode ``there is no $\mathcal{T}$ proof of size $\leq t$ shows that string $x$ is Kolmogorov random''. They are also dense hard sequences for proof systems.
\end{proof}

The above statement is stronger than a folklore informal conjecture in the literature that most tautologies hard for any given proof system, as this conjecture is focused on individual tautologies and not families.\footnote{Kraj\'{\i}\v{c}ek\cite{Krajicekproof} states that this has been a folklore conjecture since at least Chv{\'a}tal and Szemer{\'e}di\cite{ChvatalSzemeredi}.}

It is possible that $\texttt{TAUT}$ has dense hard sequences that can be described by $p$-isomorphism with $\texttt{coTHEOREMS}$, but Conjecture \ref{superpolynomiallyhard} does not hold. It is also possible that $\texttt{TAUT}$ has dense hard sequences, but these cannot be described by $p$-isomorphism with $\texttt{coTHEOREMS}$. The value added of Conjecture \ref{superpolynomiallyhard} is that it provides information on which are the dense hard sequences of tautologies,\footnote{Pich and Santhanam\cite{PichSanthanam} explore the hardness of proving tautologies related to random strings relative to a time-bounded variant of Kolmogorov complexity.} why no optimal propositional proof system exists, and which distributions are balanced. 

Conjecture \ref{superpolynomiallyhard}, if true, suggests that the nonexistence of an optimal propositional proof system is inherently a manifestation of undecidability, as that conjecture employs a dense set of undecidable sentences. This could be formalized as follows:

\begin{conjecture}\label{zfcconjecture}
For any propositional proof system $Q$, there is a conservative extension $\mathcal{T}$ of ZFC (or PA) that outperforms $Q$ in proving tautologies. Adding an undecidable statement as a new axiom to $\mathcal{T}$ to create a conservative extension $\mathcal{T'}$ further improves upon $\mathcal{T}$ in proving tautologies.
\end{conjecture}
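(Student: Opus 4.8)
The plan is to route everything through the standard Cook--Reckhow correspondence between first-order theories and propositional proof systems, as developed by Kraj\'{\i}\v{c}ek and Pudl\'ak: to a theory $\mathcal{T}$ extending ZFC one associates the proof system $P_{\mathcal{T}}$ in which a proof of a formula $\tau$ is a $\mathcal{T}$-proof of the arithmetized statement ``$\tau$ is a tautology''. The first observation I would record is that for any \emph{fixed} tautology $\tau$ the sentence ``$\tau\in\texttt{TAUT}$'' is provable already by exhaustive truth-table verification, hence provable in every consistent extension of a weak base; so every such $P_{\mathcal{T}}$ is a complete proof system, and the phrase ``conservative extension'' in Conjecture \ref{zfcconjecture} must be read as preserving exactly which tautologies are provable (namely all of them) while allowing the \emph{size} of proofs to change. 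With this reading the whole content of the conjecture becomes a statement about proof speedup, and the two engines driving it are the reflection principle (to simulate a given $Q$) and G\"odel's speedup theorem (to obtain strict improvement), combined with the density structure furnished by Theorem \ref{phiupperdensity} and Conjecture \ref{superpolynomiallyhard}.

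For the first assertion, fix a proof system $Q$. First I would take $\mathcal{T}$ to contain the reflection principle $\mathrm{RFN}(Q)$: given a $Q$-proof $w$ of $\tau$, the theory $\mathcal{T}$ proves ``$\tau\in\texttt{TAUT}$'' by verifying the short fact ``$Q(w)=\tau$'' and applying $\mathrm{RFN}(Q)$, a $\mathcal{T}$-proof of size polynomial in $|w|$, so $P_{\mathcal{T}}$ $p$-simulates $Q$. Since adjoining $\mathrm{RFN}(Q)$ does not change which tautologies are provable, $\mathcal{T}$ is conservative in the intended sense. The remaining and delicate step is \emph{strictness}: I would locate, via Conjecture \ref{superpolynomiallyhard}, the dense family $f(\langle\phi_i,1^t\rangle)$ of tautologies encoding ``there is no $\mathcal{T}$-proof of size $\leq t$ that $x$ is Kolmogorov random'', which by the theorem proved earlier in this section have no polynomial-size $Q$-proofs, and arrange that $\mathcal{T}$ carries enough additional strength (a suitable consistency or soundness statement certifying the relevant unprovability facts) to collapse this family to polynomial size under $P_{\mathcal{T}}$. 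Then $P_{\mathcal{T}}$ $p$-simulates $Q$ but is not $p$-simulated by it, i.e.\ $\mathcal{T}$ outperforms $Q$.

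For the second assertion I would invoke G\"odel's speedup theorem in effective form: adjoining to $\mathcal{T}$ a sentence $\sigma$ undecidable in $\mathcal{T}$ yields $\mathcal{T}'$ proving infinitely many $\mathcal{T}$-theorems with nonrecursively shorter proofs. The task is to steer this speedup onto tautology statements rather than arbitrary theorems. Here the density results do the work: Theorem \ref{phiupperdensity} exhibits a positive-density reservoir of true unprovable sentences $r_x$, and Conjecture \ref{superpolynomiallyhard} (or its sharpening, Conjecture \ref{exponentiallyhard}) asserts that the associated families $f(\langle\phi_i,1^t\rangle)$ are simultaneously hard sequences. Choosing $\sigma$ from, or reducible to, this reservoir lets $\mathcal{T}'$ shortcut precisely the $P_{\mathcal{T}}$-proofs of the corresponding tautology statements, so $P_{\mathcal{T}'}$ outperforms $P_{\mathcal{T}}$, while again leaving the set of provable tautologies unchanged.

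The main obstacle is reconciling the two words that sit in tension in the statement, ``undecidable'' and ``conservative'': adjoining a genuinely undecidable sentence is never first-order conservative, so any honest proof must pin down the restricted sense---conservativity with respect to the tautology-proving task, equivalently over the relevant $\Pi^0_1$ fragment---in which conservativity is claimed, and verify that the speedup respects it. The second, quantitative difficulty is the reason this remains a conjecture: speedup theorems deliver shorter proofs for \emph{some} sequence of theorems, whereas the conclusion demands the shortening occur on the \emph{propositional translations} $\tau\mapsto$ ``$\tau\in\texttt{TAUT}$'' of the dense randomness family and beat a prescribed $Q$. Controlling proof sizes finely enough under this translation is exactly the step that leans on the unproven Conjecture \ref{superpolynomiallyhard}, so any argument I could carry out would be conditional on it rather than unconditional.
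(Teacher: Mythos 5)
You should first be clear about what the paper itself does here: this statement is posed as a \emph{conjecture}, and the paper offers no proof of it. The only thing the paper establishes in the surrounding text is an explicitly weaker implication: assuming Conjecture \ref{superpolynomiallyhard}, for any $M$ accepting $\texttt{coTHEOREMS}$ there is a hard sequence arising from a true $\mathcal{T}$-unprovable sentence, and $M$ can be improved by hard-coding acceptance of that single sequence. The paper stresses that this is only ``analogous to'' adding an undecidable axiom, because the doctored machine improves on exactly one sequence, whereas adjoining an axiom proves infinitely many new theorems and shortens proofs to an arbitrary extent; bridging that gap is precisely what the paper says it cannot do. So your closing admission --- that any argument you can give is conditional on Conjecture \ref{superpolynomiallyhard} and that the quantitative ``steering'' step is unresolved --- lands you in essentially the same place as the paper, and your reading of the conservativity tension (which the paper's own phrasing glosses over) is a fair and even useful clarification. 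But your text should not be mistaken for a proof, and there is no proof in the paper to compare it against.

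Beyond conditionality, the concrete gap in your plan is the strictness step. You propose to ``arrange that $\mathcal{T}$ carries enough additional strength\ldots to collapse this family to polynomial size under $P_{\mathcal{T}}$.'' But Conjecture \ref{superpolynomiallyhard} quantifies over \emph{all} nondeterministic machines accepting $\texttt{coTHEOREMS}$ --- including $P_{\mathcal{T}}$ and $P_{\mathcal{T}'}$ themselves --- so under the very hypothesis you are assuming, $P_{\mathcal{T}}$ has its own dense hard subset $H'_{P_{\mathcal{T}}}$ of $H'$, and the conjecture says nothing about how $H'_Q$ and $H'_{P_{\mathcal{T}}}$ relate. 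In particular it hands you no family that is hard for $Q$ yet simultaneously easy for $P_{\mathcal{T}}$, which is what strict outperformance requires. Worse, the tautologies in question encode unprovability-in-$\mathcal{T}$ of Kolmogorov-randomness statements, so demanding polynomial-size $P_{\mathcal{T}}$-proofs of them is demanding that $\mathcal{T}$ give short proofs of its own finite unprovability facts --- a self-referential requirement of the kind that G\"odel- and Pudl\'ak-style results on finite consistency statements obstruct, not something a ``suitable consistency or soundness statement'' routinely delivers. The same difficulty infects your second assertion: G\"odel speedup shortens proofs of \emph{some} theorems, and nothing forces those theorems to be the propositional translations of the dense randomness family, nor the shortening to beat a prescribed $Q$. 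These are exactly the obstacles that led the author to leave the statement as a conjecture; your proposal is a sensible research plan, but it does not close them.
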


We suspect that Conjecture \ref{superpolynomiallyhard} implies  Conjecture \ref{zfcconjecture} but can prove only this weaker implication: Under Conjecture \ref{superpolynomiallyhard}, for any $M$ that accepts $\texttt{coTHEOREMS}$, there is a hard sequence for $M$ corresponding to a true sentence which is unprovable in the theory $\mathcal{T}$ used to define $\texttt{coTHEOREMS}$. $M$ can be improved by creating a new TM $M'$ which checks for this hard sequence in its input and accepts, and otherwise runs $M$. This is analogous to adding an undecidable sentence as an axiom for a theory. However, it is different, as $M'$ outperforms $M$ on only one hard sequence. By contrast, adding an undecidable sentence in a theory as an axiom  has more far reaching effects of proving infinitely more theorems and shortening proofs to an arbitrary extent.\footnote{See Ehrenfeucht and Mycielski\cite{Ehrenfeucht}.} We might expect this new theory to perform significantly better as a propositional proof system, with smaller proofs on more than just one hard sequence.

\section{Conclusion and Further Research}\label{conclusion}
This paper has stated a conjecture linking the average-case hardness proving tautologies are theorems. It suggests numerous avenues for further research.

A key question is whether the above analysis is relevant to $\textbf{NP}$ and the existence of one-way functions (OWFs), which has been the driving force behind research on average-case complexity. In particular, do dense hard sequences or just hard sequences exist for some $\textbf{NP}$-complete languages and for OWFs? If OWFs exist, it is easy to generate hard instances, and therefore hard sequences exist for a given OWF. It is known that $\texttt{SAT}$ has no almost optimal algorithm if and only if it has hard sequences if and only if there is no optimal proof system for $\texttt{SAT}$, i.e., there are endless improvements on a satisfying assignment as certificate for $\texttt{SAT}$ (Beyersdorff et al\cite{BeyersdorffKoblerMessner}).

There are statements structurally similar to Conjecture \ref{superpolynomiallyhard} implying a resolution to other open problems. If there are sequences in a $\Pi^p_2$-complete language that are hard even for $M$ with an oracle for $\textbf{coNP}$, then the polynomial hierarchy ($\textbf{PH}$) does not collapse at the first level ($\Pi^p_i\neq\Pi^p_{i+1}$ for $i=1$). The reason is that for $M$ with an oracle for $\textbf{coNP}$, there are no hard sequences for a language in $\textbf{coNP}$. A series of similar statements for each $i$ imply no collapse at any level. By the argument in Monroe\cite{MonroeTCS}, the existence of hard sequences for some $\Pi^p_i$-complete language implies that some hard sequence takes the form of a statement about bounded nonhalting that refers to the noncomputable language $\Pi_i$, as in Theorem \ref{equivalentstatements} (viii) above. These link the noncollapse of the polynomial and arithmetic hierarchies. 

Statements about hard sequences are highly flexible tools, and it may be possible to ``axiomatize'' many beliefs about open questions into a similar form. As another example, $\textbf{NL}\neq\textbf{P}$ if for any nondeterministic $M$ accepting a $\textbf{P}$-complete language, there is a an $\textbf{L}$-uniform sequence that $M$ cannot accept within a logarithmic space bound. More generally, a statement ``any $M$ accepting language $L$ in class $\mathcal{C}$ has hard sequences'', we can adjust the enhancements to or constraints on $M$'s resources, the language $L$, the class $\mathcal{C}$, what ``hard'' means in defining a hard sequence, what resource constraint is used to define uniformity, how prevalent are the hard sequences (existence versus density), and why we think the hard sequences are hard like Conjecture \ref{superpolynomiallyhard}.

These examples suggest a \emph{noncomputability hypothesis}: for most open problems, some statement about hard sequences with a link to noncomputability implies a resolution to that problem.\footnote{This is analogous to and partly overlaps Pudl{\'a}k's Feasible Incompleteness Thesis---see \cite{PudlakFiniteDomain} and \cite{PudlakLogicalFoundations} Section 6.4.} This hypothesis is testable---we can try to come up with such statements or fail. For instance, we have thus far failed to identify a statement about hard sequences would imply that the Unique Games Conjecture holds. A better understanding of which open problems can and cannot be linked to a statement about hard sequences might reveal what tools are needed to resolve those problems.\footnote{Pudl{\'a}k\cite{PudlakLogicalFoundations} (p. 564) suggests there may be value in identifying conjectures that can serve as tentative axioms until their nature is identified as provable theorems, independent statements, or factually incorrect statements.}

Such conjectures also provide a tentative set of axioms describing a hypothetical world to explore. For instance, the existence of hard sequences allows us to reason about the set of all TMs accepting a language, which has been a key challenge for complexity theory. They create a partial order over all TMs, organizing them so as to provide clarity regarding the behavior of the set of all TMs.

A recurring question in the literature is whether hard sequences of inputs or tautologies can be constructed. Conjecture \ref{superpolynomiallyhard} is rooted in Chaitin's Incompleteness Theorem, which is inherently nonconstructive unlike G{\"o}del's First Incompleteness Theorem. However, there is a scenario in which a hard sequence could be constructed. Under Conjecture \ref{exponentiallyhard}, hard sequences are defined to require $2^t$ steps for all $t$. Given $M$, it is therefore possible to enumerate $\phi_i$ for which $M$ on sequence $\langle\phi_i,1^t\rangle$ does not require $2^t$ steps for all $t$---because for these sequences, $M$ requires less than $2^t$ steps for some $t$. The set of such $\phi_i$ is c.e. If the set is not only c.e. but also c.e.-complete, this would allow construction of a hard sequence since the complement would then be a productive set (Rogers \cite{Rogers}).

The set $H'$, which includes all dense sets of true unprovable sentences, deserves further study. We suspect that $H'$ is more complex and poorly-behaved than the statements above. Another interesting question is whether there are dense sets of true unprovable sentences that do not have Kolmogorov random strings embedded.

In sum, there are numerous areas for further research including the relevance of hard sequences for $\textbf{NP}$ and OWFs, developing conjectures implying a resolution of other open problems, elaborating the implications of these conjectures, exploring other applications for hard sequences, understanding whether hard sequences can be constructed, and exploring the nature of $H'$. 
\bibliographystyle{amsplain}
\bibliography{equivalence}

\end{document}